\documentclass[11pt]{amsart}

\usepackage{amsfonts}
\usepackage{amsmath}
\usepackage{amssymb}
\usepackage{graphicx}%
\usepackage {epsf}
\usepackage{verbatim}
\setcounter{MaxMatrixCols}{30} \theoremstyle{plain}

\newtheorem{corollary}{Corollary}[section]

\newtheorem{definition}{Definition}[section]

\newtheorem{lemma}{Lemma}[section]

\newtheorem{proposition}{Proposition}[section]

\newtheorem{theorem}{Theorem}[section]

\numberwithin{equation}{section}

\begin{document}

\vspace{0.5in}

\title[A New Causal Topology and Why the Universe is Co-compact]{A New Causal Topology and Why \\ the Universe is Co-compact}
\author{Martin Maria Kov\'{a}r}
\address{Department of Mathematics, Faculty of Electrical Engineering and
Communication, University of Technology, Technick\'{a} 8, Brno, 616 69, Czech Republic}
\email{kovar@feec.vutbr.cz}
\subjclass[2000]{} \keywords{Causal site,  de Groot dual, Minkowski space, quantum gravity.}

\begin{abstract} We show that there exists a canonical topology, naturally
connected with the causal site of J. D. Christensen and L. Crane, a pointless algebraic structure
motivated by quantum gravity. Taking a causal site compatible with Minkowski space, on every
compact subset our topology became a~reconstruction of the original topology of the spacetime (only
from its causal structure). From the global point of view, the reconstructed topology is the de
Groot dual or co-compact with respect to the original, Euclidean topology. The result indicates
that the causality is the primary structure of the spacetime, carrying also its topological
information.

\end{abstract}


\maketitle

\renewcommand\theenumi{\roman{enumi}}
\renewcommand\theenumii{\arabic{enumii}}
\font\eurb=eurb9 \font\seurb=eurb7
\def\cl{\operatorname{cl}}
\def\iff{if and only if }
\def\sup{\operatorname{sup}}
\def\clt{\operatorname{cl}_\theta}
\def\cli#1{\operatorname{cl}_{#1}}
\def\inti#1{\operatorname{int}_{#1}}
\def\int{\operatorname{int}}
\def\intt{\operatorname{int}_\theta}
\def\id#1{\operatorname{\text{\sl id}}_{#1}}
\def\ord{\operatorname{ord}}
\def\SIGMA{\operatorname{\Sigma}}
\def\cf{\operatorname{cf}}
\def\diag{\operatorname{\Delta}}
\def\Mezera{\vskip 15 mm}
\def\mezera{\bigskip}
\def\Mezerka{\medskip}
\def\mezerka{\smallskip}
\def\iff{if and only if }
\def\G{\frak G}
\def\A{\Bbb A}
\def\I{\Bbb I}
\def\C{\mathcal C}
\def\F{\mathcal F}
\def\L{\mathcal L}
\def\P{\mathcal P}
\def\B{\frak B}
\def\O{\mathcal O}
\def\T{\frak T}
\def\X{\frak X}
\def\S{\mathcal S}
\def\K{\mathcal K}
\def\Q{\Bbb Q}
\def\R{\Bbb R}
\def\N{\Bbb N}
\def\D{\Bbb D}
\def\Ds{\mathcal D}
\def\T{\mathcal T}
\def\zero{\bold 0}
\def\m{\frak m}
\def\n{\frak n}
\def\ts{  space }
\def\nbd{neighborhood }
\def\nbds{neighborhoods }
\def\card{cardinal }
\def\implies{\Rightarrow }
\def\map{\rightarrow }
\def\ekv{\Leftrightarrow }
\def\gre{\succcurlyeq}
\def\ngre{\not\succcurlyeq}
\def\gr{\succ}
\def\lre{\preccurlyeq}
\def\Ty{$T_{3.5}$ }
\def\Tyk{$T_{3.5}$}
\def\Slc{\text{\eurb Slc}}
\def\Top{\text{\eurb Top}}
\def\top{\text{\seurb Top}}
\def\Comp{\text{\eurb Comp}}
\def\TReg{\text{\eurb $\Theta$-Reg}}
\def\eK{\text{\eurb K}}
\def\M{\Bbb M}
\def\up{\uparrow\!\!}
\def\down{\downarrow\!\!}
\def\meet{\wedge}
\def\join{\vee}
\def\causeq{\preccurlyeq}
\def\caus{\prec}

\section{Introduction}

\medskip

The belief that the causal structure of spacetime is its most fundamental underlying structure is
almost as old as the idea of the relativistic spacetime itself. But how it is related to the
topology of spacetime? By tradition, there are no doubts regarding the topology of spacetime at
least locally, since it is considered to be locally homeomorphic with the cartesian power of the
real line, equipped with the Euclidean topology. But more recently, there appeared concepts of
discrete and pointless models of spacetime in which the causal structure is introduced
axiomatically and so independently on the locally Euclidean models. Is, in these cases, the
axiomatic causal structure rich enough to carry also the full topological information? And, after
all, how the topology that we perceive around us and which is essentially and implicitly at the
background of many physical phenomena, may arise?

In this paper we introduce a general construction, suitable for equipping a set of objects with a
topology-like structure, using the inner, natural and intuitive relationships between them. We use
the construction to show that another algebraic structure, motivated by the research in quantum
geometry and gravitation -- the causal site of J. D. Christensen and L. Crane -- very naturally
generates a compact T$_1$ topology on itself. Testing the construction on Minkowski space we show
that coming out from its causality structure, the universe -- in its first approximation
represented by Minkowski space -- naturally has so called co-compact topology (also called the de
Groot dual topology) which is compact, superconnected, T$_1$ and non-Hausdorff. The co-compact
topology on Minkowski space coincides with the Euclidean topology on all compact sets -- in the
more physically related terminology, at the finite distances. Therefore, the studied construction
has probably no impact to the description of local physical phenomena, but it changes the global
view at the universe. Perhaps it could help to explain how the topology that we perceive ``around
us" (in any way -- by our everyday experience, as well as by experiments, measurements and other
physical phenomena) may arise from causality.

\bigskip

\section{Mathematical Prerequisites}\label{prerequisites}

Throughout this paper, we mostly use the usual terminology of general topology, for which the
reader is referred to \cite{Cs} or \cite{En}, with one exception -- in a consensus with a modern
approach to general topology, we no longer assume the Hausdorff separation axiom as a part of the
definition of compactness. This is especially affected by some recent motivations from computer
science, but also the contents of the paper \cite{HPS} confirms that such a modification of the
definition of compactness is a relevant idea. Thus we say that a topological space is {\it
compact}, if every its open cover has a finite subcover, or equivalently, if every centered system
of closed sets or a closed filter base has a non-empty intersection. Note that by the well-known
Alexander's subbase lemma, the general closed sets may be replaced by more special elements of any
closed subbase for the topology.

We have already  mentioned the co-compact or the de Groot dual topology, which was first
systematically studied probably at the end of the 60's by  J. de Groot and his coworkers, J. M.
Aarts, H. Herrlich, G. E. Strecker and E. Wattel. The initial paper is \cite{Gro}. About 20 years
later the co-compact topology again came to the center of interest of some topologists and
theoretical computer scientists in connection with their research in domain theory. During
discussions in the community the original definition due to de Groot was slightly changed to its
current form, inserting a word ``saturated" to the original definition (a set is saturated, if it
is an intersection of open sets, so in a T$_1$ space, all sets are saturated). Let $(X,\tau)$ be a
topological space. The topology generated by the family of all compact saturated sets used as the
base for the closed sets, we denote by $\tau^G$ and call it {\it co-compact} or {\it de Groot} dual
with respect to the original topology $\tau$. In \cite{LM} J. Lawson and M. Mislove stated
question, whether the sequence, containing the iterated duals of the original topology, is infinite
or the process of taking duals terminates after finitely many steps with topologies that are dual
to each other. In 2001 the author solved the question and proved that only 4 different topologies
may arise (see \cite{Kov1}).

The following theorem summarizes the previously mentioned facts important for understanding the
main results, contained in Section~\ref{causal}. The theorem itself is not new, under slightly
different terminology the reader can essentially find it in \cite{Gro}. A more general result,
equivalently characterizing the topologies satisfying $\tau=\tau^{GG}$, the reader may find in the
second author's paper \cite{Kov3}. For our purposes, the reader may replace a general non-compact,
locally compact Hausdorff space by the Minkowski space equipped with the Euclidean topology. The
proof we present here only for the reader's convenience, without any claims of originality. For the
proof we need to use the following notion. Let $\psi$ be a family of sets. We say that $\psi$ has
the finite intersection property, or briefly, that $\psi$ has {\it f.i.p.}, if for every $P_1,
P_2,\dots, P_k\in\psi$ it follows $P_1\cap P_2\cap \dots\cap P_k\ne\varnothing$. In some literature
(for example, in \cite{Cs}), a collection $\psi$ with this property is called {\it centered}.

\bigskip

\begin{theorem}\label{degroot} Let $(X,\tau)$ be a non-compact, locally compact Hausdorff topological space. Then
\begin{enumerate}
\item $\tau^{G}\subseteq \tau$, \item $\tau=\tau^{GG}$, \item $(X,\tau^{G})$ is compact and
superconnected, \item the topologies induced from $\tau$ and $\tau^G$ coincide on every compact
subset of $(X,\tau)$.
\end{enumerate}
\end{theorem}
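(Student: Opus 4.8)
The plan is to begin by making $\tau^{G}$ completely explicit. Since $(X,\tau)$ is Hausdorff, every $\tau$-compact set is $\tau$-closed and, being in a $T_1$ space, saturated; hence the closed subbase defining $\tau^{G}$ consists exactly of the $\tau$-compact sets. A finite union of compact sets is compact, and an arbitrary intersection of compact sets is again compact (it is $\tau$-closed and contained in one of them). So the family $\{X\}\cup\{K:K\text{ is }\tau\text{-compact}\}$ is already closed under finite unions and arbitrary intersections, and I would conclude that it is precisely the family of $\tau^{G}$-closed sets; equivalently, the nonempty $\tau^{G}$-open sets are exactly the complements $X\setminus K$ of $\tau$-compact sets. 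Statement (1) is then immediate, since every $\tau^{G}$-closed set is $\tau$-compact, hence $\tau$-closed.

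With this description in hand, (3) and (4) are short. For compactness I would apply Alexander's subbase lemma to the closed subbase of compact sets: given such a subbasic family with f.i.p., fix one member $K_0$; the traces of the remaining members on $K_0$ are $\tau$-closed subsets of the compact set $K_0$ with f.i.p., so compactness of $K_0$ in $\tau$ yields a nonempty intersection. For superconnectedness I would argue by contradiction: two disjoint nonempty $\tau^{G}$-open sets would have proper $\tau^{G}$-closed complements, each therefore $\tau$-compact, whose union is $X$; but a finite union of compacts is compact, contradicting the non-compactness of $X$. For (4), the inclusion $\tau^{G}|_K\subseteq\tau|_K$ follows from (1); conversely, a relatively $\tau$-closed subset $A$ of a compact $K$ is a $\tau$-closed subset of a compact set, hence $\tau$-compact, hence $\tau^{G}$-closed, so $A=A\cap K$ is relatively $\tau^{G}$-closed, giving $\tau|_K\subseteq\tau^{G}|_K$.

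The heart of the matter is (2), and the hard part will be to identify the $\tau^{G}$-compact sets. I would first note that $(X,\tau^{G})$ is $T_1$ (singletons are $\tau$-compact, hence $\tau^{G}$-closed), so every set is $\tau^{G}$-saturated and it suffices to describe $\tau^{G}$-compactness. I claim that $S$ is $\tau^{G}$-compact \iff $S$ is $\tau$-closed. For the easy direction, if $S$ is $\tau$-closed and $\{X\setminus K_i\}$ is a $\tau^{G}$-open cover, then $\bigcap_i(K_i\cap S)=\varnothing$; fixing $i_0$, the sets $K_{i_0}\cap K_i\cap S$ are $\tau$-closed subsets of the $\tau$-compact set $K_{i_0}\cap S$ with empty intersection, so finitely many already intersect to $\varnothing$, which yields a finite subcover. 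The delicate direction is the converse, and this is where local compactness is essential: if $S$ is not $\tau$-closed, choose $p\in\cl S\setminus S$ and cover $S$ by the sets $X\setminus K$ with $K$ ranging over the compact neighborhoods of $p$. This family covers $S$ because for each $s\in S$ one has $s\neq p$, and local compactness together with Hausdorffness provides a compact neighborhood of $p$ contained in an open set missing $s$; yet any finite subfamily has union $X\setminus\bigcap_j K_j$, where $\bigcap_j K_j$ is again a compact neighborhood of $p$ and hence meets $S$ (as $p\in\cl S$), so no finite subfamily covers $S$. This contradicts $\tau^{G}$-compactness and proves the claim. Since the $\tau^{G}$-compact saturated sets are therefore exactly the $\tau$-closed sets, they form the closed subbase of $\tau^{GG}$; as the $\tau$-closed sets are already closed under finite unions and arbitrary intersections, they constitute precisely the $\tau^{GG}$-closed sets, whence $\tau^{GG}=\tau$.
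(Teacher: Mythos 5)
Your proposal is correct, and its core ingredients coincide with the paper's: the ``easy'' half of (2) is the same trace/f.i.p.\ argument showing $\tau$-closed sets are $\tau^G$-compact, the hard half uses local compactness plus Hausdorffness in the same way, and your arguments for (1), (3), (4) match the paper's almost verbatim. The difference is one of decomposition, and it is worth noting. You first pin down $\tau^G$ completely --- its closed sets are exactly $X$ together with the $\tau$-compact sets, since in a Hausdorff space the compacta are stable under finite unions and arbitrary intersections --- and then prove the single key lemma that the $\tau^G$-compact sets are precisely the $\tau$-closed sets; everything else, including $\tau^{GG}=\tau$, falls out by reading off the generated closed sets. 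The paper never states either characterization: it proves the two inclusions $\tau\subseteq\tau^{GG}$ and $\tau^{GG}\subseteq\tau$ directly at the level of the closed (sub)base. Moreover, your hard direction is the contrapositive of the paper's: the paper takes a $\tau^G$-compact $H$ and a point $x\notin H$, covers $H$ by complements of compact closures of neighborhoods of $x$, and extracts a finite subcover to manufacture a $\tau$-open set separating $x$ from $H$; you instead take a non-closed $S$, pick $p\in\cl S\smallsetminus S$, and exhibit the cover of $S$ by complements of compact neighborhoods of $p$ as one admitting no finite subcover. These are mirror images of the same argument, so neither buys extra generality, but your version buys two small things: it makes explicit the $T_1$-ness of $(X,\tau^G)$ and hence why saturation is vacuous in forming $\tau^{GG}$ (a point the paper passes over silently), and it leaves you with a complete description of the dual topology rather than just the two inclusions.
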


\begin{proof} The topology $\tau^G$ has a closed base  which consists of  compact sets.
Since in a Hausdorff space all compact sets are closed, we have (i).

Let $C\subseteq X$ be a closed set with respect to $\tau$, to show that $C$ is compact with respect
to $\tau^G$, let us take a non-empty family $\Phi$ of compact subsets of $(X,\tau)$, such that the
family $\{C\}\cup\Phi$ has f.i.p. Take some $K\in\Phi$. Then the family $\{C\cap K\}\cup\{C\cap
F|\, F\in \Phi\}$ also has f.i.p. in a compact set $K$, so it has a non-empty intersection. Hence,
also the intersection of $\{C\}\cup\Phi$ is non-empty, which means that $C$ is compact with respect
to $\tau^G$. Consequently, $C$ is closed in $(X,\tau^{GG})$, which means that $\tau\subseteq
\tau^{GG}$. The topology $\tau^{GG}$ has a closed base consisting of sets which are compact in
$(X,\tau^{G})$. Take such a set, say $H\subseteq X$. Let $x\in X\smallsetminus H$. Since $(X,\tau)$
is locally compact and Hausdorff, for every $y\in H$ there exist $U_y, V_y\in \tau$ such that $x\in
U_y$, $y\in V_y$ and $U\cap V=\varnothing$, with $\cl U_y$ compact. Denote $W_y=X\smallsetminus\cl
U_y$. We have $y\in V_y\subseteq W_y$, so the sets $W_y$, $y\in H$ cover $H$. The complement of
$W_y$ is compact with respect to $\tau$, so $W_y\in \tau^G$. The family $\{W_y|\,y\in H\}$ is an
open cover of the compact set $H$ in $(X,\tau^G)$, so it has a finite subcover, say
$\{W_{y_1},W_{y_2},\dots,W_{y_k}\}$. Denote $U=\bigcap_{i=1}^k U_{x_i}$. Then $U\cap
H=\varnothing$, $x\in U\subseteq X\smallsetminus H$, which means that $X\smallsetminus H\in\tau$
and $H$ is closed in $(X,\tau)$. Hence, $\tau^{GG}\subseteq \tau$, an together with the previously
proved converse inclusion, it gives (ii).

Let us show (iii). Take any collection $\Psi$ of compact subsets of $(X,\tau)$ having f.i.p. They
are both compact and closed in $(X,\tau)$, so $\bigcap\Psi\ne\varnothing$. Then $(X,\tau^G)$ is
compact. Let $U,V\in\tau^G$ and suppose that $U\cap V=\varnothing$. The complements of $U$, $V$ are
compact in $(X,\tau)$ as intersections of compact closed sets in a Hausdorff space. Then $(X,\tau)$
is compact as a union of two compact sets, which is not possible. Hence, it holds (iii).

Finally, take a compact subset $K$ and a closed subset $C$ of $(X,\tau)$. Then $K\cap C$ is compact
in $(X,\tau)$ and hence closed  in $(X,\tau^G)$. Thus the topology on $K$ induced from $\tau^G$ is
finer than the topology induced from $\tau$. Together with (i), we get (iv).
\end{proof}

\medskip

\bigskip

\section{How to Topologize Everything}

\medskip

As it has been recently noted in \cite{HPS}, the nature or  the physical universe, whatever it is,
has probably no existing, real points like in the classical Euclidean geometry (or, at least, we
cannot be absolutely sure of that). Points, as a useful mathematical abstraction, are
infinitesimally small and thus cannot be measured or detected by any physical way. But what we can
be sure that really exists, there are various locations, containing concrete physical objects. In
this paper we will call these locations {\it places}. Various places can overlap, they can be
merged, embedded or glued together, so the theoretically understood virtual ``observer" can visit
multiple places simultaneously. For instance, the Galaxy, the Solar system, the Earth, (the
territory of) Europe, Brno (a beautiful city in Czech Republic, the place of author's residence),
the room in which the reader is present just now, are simple and natural examples of places
conceived in our sense. Certainly, in this sense,  one can be present at many of these places at
the same time, and, also certainly, there exist pairs of places, where the simultaneous presence of
any physical objects is not possible. Or, at least, from our everyday experience it seems the
nature behaves in this way. Thus the presence of various physical objects connects these primarily
free objects -- our places -- to the certain structure, which we call a {\it framework}.

Note that it does not matter that the places are, at the first sight, determined rather vaguely or
with some uncertainty. They are conceived as elements of some algebraic structure, with no any
additional geometrical or metric structure and as we will see later, the ``uncertainty" could be
partially eliminated by the relationships between them. Let's now give the precise definition.

\begin{definition} Let $\P$ be a set, $\pi\subseteq 2^\P$. We say that $(\P,\pi)$ is a
framework. The elements of $\P$ we call {\it places}, the set $\pi$ we call {\it framology}.
\end{definition}

Although every topological space is a framework by the definition, the elementary interpretation of
a framework is very different from the usual interpretation of a topological space. The elements of
the framology are not primarily considered as neighborhoods of places, although this seems to be
also very natural. If $\P$ contains all the places that are or can be observed, the framology $\pi$
contains the list of observations of the fact that the virtual ``observer" or some physical object
that ``really exists" (whatever it means), can be present at some places simultaneously. The
structure which $(\P,\pi)$ represents arises from these observations.

\medskip

Let us introduce some other useful notions.

\begin{definition} Let $(P,\pi)$ and $(S,\sigma)$ be frameworks. A mapping $f:P\map S$ satisfying
$f(\pi)\subseteq\sigma$ we call a {\it framework morphism}.
\end{definition}

\begin{definition} Let $(P,\pi)$ be a framework, $\sim$ an equivalence relation on $P$. Let
$P_\sim$ be the set of all equivalence classes and $g:P\map P_\sim$ the corresponding quotient map.
Then $(P_\sim, g(\pi))$ is called the quotient framework of $(P,\pi)$ (with respect to the
equivalence $\sim$).
\end{definition}

\begin{definition}A framework $(P,\pi)$ is T$_0$ if for every $x,y\in P$, $x\ne y$, there
exists $U\in\pi$ such that $x\in U$, $y\notin U$ or $x\notin U$, $y\in U$.
\end{definition}

\begin{definition} Let $(P,\pi)$ be a framework. Denote $P^d=\pi$ and $\pi^d=\{\pi(x)|\, x\in P\}$,
where $\pi(x)=\{U|\, U\in\pi, x\in U\}$. Then $(P^d, \pi^d)$ is the {\it dual} framework of
$(P,\pi)$. The places of the dual framework $(P^d, \pi^d)$ we call {\it abstract points} or simply
{\it points} of the original framework $(P,\pi)$.
\end{definition}

The framework duality is a simple but handy tool for switching between the classical point-set
representation (like in topological spaces) and the point-less representation, introduced above.

\bigskip

{\bf Some Examples.} There is a number of  natural examples of mathematical structures satisfying
the definition of a framework, including non-oriented graphs, topological spaces (with open maps as
morphisms), measurable spaces or texture spaces  of M. Diker \cite{Di}. Among physically motivated
examples, we may mention the Feynman diagrams with particles in the role of places and interactions
as the associated abstract points. Very likely, certain aspects of the string theory, related to
general topology, can also be formulated in terms of the framework theory.

\bigskip

It should be noted that the notion of a framework is a special case of the notion of the {\it
formal context}, due to B. Ganter and R. Wille \cite{GW}, sometimes also referred as the Chu space
\cite{ChL}. Recall that a formal context is a triple $(G,M, I)$, where $G$ is a set of objects, $M$
is a set of attributes and $I\subseteq G\times M$ is a binary relation. Thus a framework $(P,\pi)$
may be represented as a formal context $(P,\pi, \in)$, where objects are the places and their
attributes are the abstract points.  Even though the theory and methods of formal concept analysis
may be a useful tool also for our purposes, we prefer the topology-related terminology that we
introduced in this section because it seems to be more close to the way, how mathematical physics
understands to the notion of spacetime. It also seems that frameworks are closely related to the
notion of partial metric due to S. Matthews \cite{Ma}, but these relationships will be studied in a
separate paper.

\begin{proposition}Let $(P,\pi)$ be a framework. Then $(P^d,\pi^d)$ is T$_0$.
\end{proposition}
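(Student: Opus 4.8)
The plan is to reduce the $T_0$ property of the dual framework directly to the extensionality of sets, after carefully unwinding what the places and the framology of $(P^d,\pi^d)$ actually are. Recall that the places of $P^d$ are exactly the elements of $\pi$, and that a framology element of $\pi^d$ has the form $\pi(x)=\{U\in\pi\mid x\in U\}$ for some $x\in P$. Crucially, for a place $U\in P^d=\pi$ and a framology element $\pi(x)\in\pi^d$, the dual membership relation $U\in\pi(x)$ unwinds to the original point-membership $x\in U$. Once this dictionary is set up, the whole statement should collapse to a one-line observation.

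Concretely, first I would fix two distinct places $U,V\in P^d$, i.e.\ two distinct members of $\pi$, hence two distinct subsets of $P$. Since $U\neq V$ as subsets, their symmetric difference is nonempty, so I can choose a point $x\in P$ lying in exactly one of them; without loss of generality take $x\in U$ and $x\notin V$. The candidate separating framology element is then $W=\pi(x)$, which belongs to $\pi^d$ by construction. It remains to verify the two membership conditions: from $U\in\pi$ together with $x\in U$ we get $U\in\pi(x)=W$, while from $x\notin V$ we get $V\notin\pi(x)=W$. Thus $W$ separates $U$ from $V$ in exactly the sense required by the definition of $T_0$, and the dual is $T_0$.

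The hard part, such as it is, is purely bookkeeping: one must keep straight the shift in which the former framology sets become the new places, the families $\pi(x)$ indexed by the original places become the new framology, and dual membership $U\in\pi(x)$ translates back to $x\in U$. No separation hypothesis on the original framework $(P,\pi)$ is needed; the argument uses only that two distinct sets are distinguished by some element, which is why $(P^d,\pi^d)$ is always $T_0$ irrespective of any properties of $(P,\pi)$.
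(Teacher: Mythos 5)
Your proof is correct and follows essentially the same argument as the paper: two distinct elements of $\pi$ differ on some $x\in P$, and then $\pi(x)\in\pi^d$ separates them via the dictionary $U\in\pi(x)\Leftrightarrow x\in U$. The only cosmetic difference is that you invoke symmetry (``without loss of generality'') where the paper writes out both cases explicitly.
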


\begin{proof} Denote $S=\pi$, $\sigma=\{\pi(x) |\, x\in P\}$, so $(S,\sigma)$ is the dual framework
of $(P, \pi)$. Let $u, v\in S$, $u\ne v$. Since $u, v\in 2^P$ are different sets, either there
exists $x\in u$ such that $x\notin v$, or there exists $x\in v$, such that $x\notin u$. Then
$u\in\pi(x)$ and $v\notin\pi(x)$, or $v\in\pi(x)$ and $u\notin\pi(x)$. In both cases there exists
$\pi(x)\in\sigma$, containing one element of $\{u, v\}$ and not containing the other.
\end{proof}

\begin{theorem} Let $(P,\pi)$ be a framework. Then $(P^{dd}, \pi^{dd})$ is isomorphic to the
quotient of $(P,\pi)$. Moreover, if $(P,\pi)$ is T$_0$, then $(P^{dd}, \pi^{dd})$ and $(P,\pi)$ are
isomorphic.
\end{theorem}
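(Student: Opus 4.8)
The plan is to unwind the definition of the double dual explicitly and then to exhibit a concrete isomorphism, where the quotient is taken with respect to the natural equivalence $x\sim y$ \iff $\pi(x)=\pi(y)$. First I would identify $(P^{dd},\pi^{dd})$ concretely. Writing $S=P^d=\pi$ and $\sigma=\pi^d=\{\pi(x)\mid x\in P\}$ for the dual framework, a second application of the duality gives $P^{dd}=S^d=\sigma=\{\pi(x)\mid x\in P\}$ and $\pi^{dd}=\sigma^d=\{\sigma(u)\mid u\in\pi\}$, where $\sigma(u)=\{V\in\sigma\mid u\in V\}$ for $u\in S$. The useful simplification is that, since each $u\in\pi$ lies in $\pi(x)$ precisely when $x\in u$, the set $\sigma(u)$ collapses to $\{\pi(x)\mid x\in u\}$. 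Thus the places of the double dual are exactly the sets $\pi(x)$, and its framology is indexed by the original framology sets $u\in\pi$.

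Next I would set $\Phi\colon P_\sim\map P^{dd}$, $\Phi([x])=\pi(x)$. Well-definedness and injectivity are immediate from the definition of $\sim$, and surjectivity holds because $P^{dd}$ consists of precisely the sets $\pi(x)$; hence $\Phi$ is a bijection on places. The heart of the argument is to check that $\Phi$ carries the quotient framology $g(\pi)$ onto $\pi^{dd}$. For $U\in\pi$ I would compute $\Phi(g(U))=\{\pi(x)\mid x\in U\}$, verifying that passing to equivalence classes neither enlarges nor shrinks this set (if $[x]\in g(U)$ then some representative already lies in $U$ and yields the same value $\pi(x)$). Comparing with the formula $\sigma(U)=\{\pi(x)\mid x\in U\}$ obtained in the first step gives $\Phi(g(U))=\sigma(U)$, whence $\Phi(g(\pi))=\{\sigma(U)\mid U\in\pi\}=\pi^{dd}$. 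Since $\Phi$ is a bijection on places and takes framology onto framology, it is a framework isomorphism, which proves the first assertion.

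The hard part is not any single deduction but the bookkeeping: one must keep straight which symbols denote subsets of $P$ and which denote subsets of $P^d=\pi$, so that the two computations of $\sigma(U)$ genuinely coincide. Finally, for the ``moreover'' clause I would note that the T$_0$ axiom says exactly that distinct $x,y$ are separated by some $U\in\pi$, that is, $\pi(x)\ne\pi(y)$; hence $\sim$ is the identity relation, each class $[x]$ is a singleton, and $g$ is a bijection. Then $(P,\pi)\cong(P_\sim,g(\pi))$ via $g$, and composing with $\Phi$ yields $(P,\pi)\cong(P^{dd},\pi^{dd})$.
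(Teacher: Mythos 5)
Your proposal is correct and follows essentially the same route as the paper: your map $\Phi([x])=\pi(x)$ is exactly the paper's map $h$ (the paper merely defines $f(x)=\pi(x)$ first and then factors it through the quotient), and the key computation $\sigma(U)=\{\pi(x)\mid x\in U\}$ is the paper's identity $f(U)=\rho(U)$. The treatment of the T$_0$ case is likewise identical, so there is nothing substantive to add.
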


\begin{proof} We denote $R=P^d=\pi$, $\rho=\pi^d=\{\pi(x) |\, x\in P \}$,
$S=R^d=\rho$, $\sigma=\rho^d=\{\rho(x) |\, x\in R \}$. Then $(S, \sigma)$ is the double dual of
$(P,\pi)$. It remains to show, that $(S, \sigma)$ is isomorphic to some quotient of $(P,\pi)$.

For every $x\in P$, we put $f(x)=\pi(x)$. Then $f:P\map S$ is a surjective mapping. It is easy to
show, that $f$ is a morphism. Indeed, if $U\in\pi$, then $f(U)=\{\pi(x) |\, x\in U\}=\{\pi(x) |\,
x\in P,  U\in\pi(x) \}=\{V |\, V\in\rho,  U\in V\}=\rho(U)\in\sigma$. Therefore,
$f(\pi)\subseteq\sigma$, which means that $f$ is an epimorphism of the framework $(P,\pi)$ onto
$(S,\sigma)$.

Now, we define $x\sim y$ for every $x, y\in P$ \iff $f(x)=f(y)$. Then $\sim $ is an equivalence
relation on $P$. For every equivalence class $[x]\in P_\sim$ we put $h([x])=f(x)$. The mapping $h:
P_\sim\map S$ is correctly defined, moreover, it is a bijection. The verification that $h$ is a
framework isomorphism is standard, but, because of completeness, it has its natural place here. The
quotient framology on $P_\sim$ is $g(\pi)$, where $g:P\map P_\sim$ is the quotient map. The
quotient map $g$ satisfies the condition $h\circ g=f$. Let $W\in g(\pi)$. There exists $U\in \pi$
such that $W=g(U)$. Then $h(W)=h(g(U))=f(U)\in\sigma$. Hence $h(g(\pi))\subseteq\sigma$, which
means that $h: P_\sim\map S$ is a framework morphism. Conversely, let $W\in\sigma=\{\rho(U) |\,
U\in\pi\}$. We will show that $h^{-1}(W)\in g(\pi)$. By the previous paragraph, $\rho(U)=f(U)$ for
every $U\in\pi$, so there exists $U\in\pi$, such that $W=f(U)=h(g(U))$. Since $h$ is a bijection,
it follows that $h^{-1}(W)=g(U)\in g(\pi)$. Hence, also $h^{-1}:S \map P_\sim$ is a framework
morphism, so the frameworks $(P_\sim, g(\pi) )$ and $(S,\sigma)$ are isomorphic.

Now let us consider the special case when $(P,\pi)$ is T$_0$. Suppose that $f(x)=f(y)$ for some
$x,y\in P$. Then $\pi(x)=\pi(y)$, which is possible only when $x=y$. Then the relation $\sim$ is
the diagonal relation, and the quotient mapping $g$ is an isomorphism.
\end{proof}

\begin{corollary} Every framework arise as dual \iff it is T$_0$.
\end{corollary}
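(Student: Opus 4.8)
The plan is to prove the two implications separately, each one invoking precisely one of the two preceding results. Throughout, I read ``arising as a dual'' up to framework isomorphism: a framework $(S,\sigma)$ arises as a dual if there exists some framework $(Q,\theta)$ such that $(S,\sigma)$ is isomorphic to $(Q^d,\theta^d)$. With this reading the corollary becomes a clean characterization of the duals among all frameworks.

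For the necessity direction, suppose $(S,\sigma)$ arises as a dual, so that $(S,\sigma)$ is isomorphic to $(Q^d,\theta^d)$ for some framework $(Q,\theta)$. The Proposition already tells us that the dual $(Q^d,\theta^d)$ is T$_0$. It then remains only to observe that being T$_0$ is preserved under framework isomorphism: if $\varphi$ is the isomorphism and $u,v\in S$ are distinct, then $\varphi(u)\ne\varphi(v)$ are separated by some element of the framology of $(Q^d,\theta^d)$, and its preimage under $\varphi$ lies in $\sigma$ and separates $u$ from $v$. Hence $(S,\sigma)$ is T$_0$.

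For the sufficiency direction, suppose $(P,\pi)$ is T$_0$. By the Theorem, $(P,\pi)$ is isomorphic to its double dual $(P^{dd},\pi^{dd})$. But $(P^{dd},\pi^{dd})=((P^d)^d,(\pi^d)^d)$ is, by the very definition of the dual, the dual of the framework $(P^d,\pi^d)$. Therefore $(P,\pi)$ is isomorphic to a dual framework, i.e.\ it arises as a dual, which completes this implication and with it the proof.

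The argument is short precisely because the substantive content resides in the two earlier results. The only point demanding care is the bookkeeping forced by the phrase ``up to isomorphism'' in the definition of ``arising as a dual,'' together with the routine verification that T$_0$ is an isomorphism invariant of frameworks. I expect this invariance check to be the sole mild obstacle, and it reduces entirely to unwinding the condition $f(\pi)\subseteq\sigma$ defining a framework morphism and applying it to both $\varphi$ and $\varphi^{-1}$.
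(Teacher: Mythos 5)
Your proof is correct and follows exactly the route the paper intends: the corollary is stated without proof precisely because it is the combination of the Proposition (duals are T$_0$) and the Theorem (a T$_0$ framework is isomorphic to its double dual, which is by definition the dual of $(P^d,\pi^d)$). Your added care about reading ``arises as a dual'' up to isomorphism, and the check that T$_0$ is an isomorphism invariant, are exactly the small details the paper leaves implicit.
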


\begin{corollary} For every framework $(P,\pi)$, it holds  $(P^d,\pi^d)\cong (P^{ddd},
\pi^{ddd})$.
\end{corollary}

\bigskip

\section{Topology of Causal Sites}\label{causal}

\medskip

In this section we show that the notion of a~framework, introduced and studied in the previous
section, has some real utility and sense. In a contrast to simple examples mentioned above, from a
properly defined  framework we will be able to construct a topological structure with a real
physical meaning.

\medskip

Recall that a {\it causal site} $(S,\sqsubseteq, \prec)$ defined by J. D. Christensen and L. Crane
in \cite{CC} is a set $S$ of {\it regions} equipped with two binary relations $\sqsubseteq$,
$\prec$, where $(S,\sqsubseteq)$ is a partial order having the binary suprema $\sqcup$ and the
least element $\bot\in S$, and $(S\smallsetminus\{\bot\},\prec)$ is a strict partial order (i.e.
anti-reflexive and transitive), linked together by the following axioms, which are satisfied for
all regions $a, b, c\in S$:
\begin{enumerate}

\item $b\sqsubseteq a$ and $a\prec c$ implies $b\prec c$,

\item $b\sqsubseteq a$ and $c\prec a$ implies $c\prec b$,

\item $a\prec c$ and $b\prec c$ implies $a\sqcup b\prec c$.

\item There exits $b_a\in S$, called {\it cutting of $a$ by $b$}, such that {
     \begin{enumerate}

     \item $b_a\prec a$ and $b_a\sqsubseteq b$;

     \item if $c\in S$, $c\prec a$ and $c\sqsubseteq b$ then $c\sqsubseteq b_a$.

     \end{enumerate}

}
\end{enumerate}

\medskip

Consider a causal site $(P,\sqsubseteq, \prec)$ and let us define appropriate framework structure
on $P$. We say that a subset $F\subseteq P$ set is centered, if for every $x_1, x_2, \dots, x_k\in
F$ there exists $y\in P$, $y\ne\bot$ satisfying $y\sqsubseteq x_i$ for every $i=1,2,\dots, k$. If
$\L\subseteq 2^P$ is a chain of centered subsets of $P$ linearly ordered by the set inclusion
$\subseteq$, then $\bigcup \L$ is also a centered set. Then every centered $F\subseteq P$ is
contained in some maximal centered $M\subseteq P$. Let $\pi$ be the family of all maximal centered
subsets of $P$. Now, consider the framework $(P,\pi)$ and its dual $(P^d, \pi^d)$. Let $(X,\tau)$
be the topological space with $X=P^d=\pi$ and the topology $\tau$ generated by its closed subbase
(that is, a subbase for the closed sets) $\pi^d$.

\begin{theorem}\label{comp} The topological space $(X,\tau)$, corresponding to the framework $(P^d,\pi^d)$ and the causal
site $(P,\sqsubseteq, \prec)$, is compact T$_1$.
\end{theorem}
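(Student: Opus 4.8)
The plan is to establish the two properties separately, treating the $T_1$ axiom first and then compactness by way of Alexander's subbase lemma applied to the closed subbase $\pi^d=\{\pi(x)\mid x\in P\}$ that generates $\tau$.

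For the $T_1$ axiom I would show that each singleton $\{M\}$, where $M\in\pi=X$ is a maximal centered subset of $P$, is closed. The natural candidate closed set is the intersection $\bigcap_{x\in M}\pi(x)$, which is closed because it is an intersection of subbasic closed sets. Unwinding the definition of $\pi(x)$, one has $\bigcap_{x\in M}\pi(x)=\{N\in\pi\mid M\subseteq N\}$. Since both $M$ and $N$ are maximal centered sets, the inclusion $M\subseteq N$ forces $M=N$ by maximality, so this intersection equals exactly $\{M\}$. Hence every point is closed and $(X,\tau)$ is $T_1$.

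For compactness I would invoke Alexander's subbase lemma, as recalled in Section~\ref{prerequisites}: it suffices to verify that every subfamily of the closed subbase $\pi^d$ having f.i.p. has non-empty intersection. So let $A\subseteq P$ and suppose $\{\pi(x)\mid x\in A\}$ has f.i.p. The crucial step is to translate this topological hypothesis into the order-theoretic language of the causal site. For any $x_1,\dots,x_k\in A$ the f.i.p. yields some $M\in\pi$ with $x_1,\dots,x_k\in M$; since $M$ is centered and any subset of a centered set is centered, the finite set $\{x_1,\dots,x_k\}$ has a common lower bound $y\ne\bot$ in $P$. As this holds for every finite subfamily, $A$ is itself centered.

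Once $A$ is known to be centered, the construction preceding the theorem (the Zorn-type observation that every centered subset is contained in a maximal centered one) supplies a maximal centered $M_0\supseteq A$, i.e.\ an element $M_0\in\pi$ with $M_0\in\pi(x)$ for every $x\in A$. Thus $M_0\in\bigcap_{x\in A}\pi(x)$, the intersection is non-empty, and $(X,\tau)$ is compact. I expect the only genuinely delicate point to be the bridge in the compactness argument --- recognising that the finite intersection property of the subbasic sets $\pi(x)$ is precisely the assertion that the index set $A$ is centered in the sense of the causal site; everything else (maximality for $T_1$, and the extension lemma for compactness) is either immediate or already furnished in the text.
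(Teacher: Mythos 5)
Your proof is correct and follows essentially the same route as the paper: the compactness half (Alexander's subbase lemma plus the translation that f.i.p.\ of the family $\{\pi(x)\mid x\in A\}$ means exactly that $A$ is centered, hence extends to some maximal centered $M_0\in\pi$ lying in the intersection) is the paper's argument almost verbatim. The only difference is cosmetic: for T$_1$ you show each singleton is closed via $\{M\}=\bigcap_{x\in M}\pi(x)$, an intersection of subbasic closed sets, whereas the paper separates two distinct maximal centered families $U\ne V$ by the open complements of $\pi(x)$, $\pi(y)$ for witnesses $x\in U\smallsetminus V$, $y\in V\smallsetminus U$ --- both versions rest on the same maximality property of centered families, so this is an equivalent reformulation rather than a genuinely different proof.
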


\begin{proof} By the well-known Alexander's subbase lemma, for proving the compactness of $(X,\tau)$ it is sufficient
to show, that any subfamily of $\pi^d$ having the f.i.p., has nonempty intersection. The subbase
for the closed sets of $(X,\tau)$ has the form $\pi^d=\{\pi(x)|\, x\in P\}$, so any subfamily of
$\pi^d$ can be indexed by a subset of $P$. Let $F\subseteq P$ and suppose that for every $x_1,
x_2,\dots, x_k\in F$ we have
$$\pi(x_1)\cap\pi(x_2)\cap\dots\cap\pi(x_k)\ne\varnothing.$$ Then there exists $U\in\pi$ such that
$U\in\pi(x_1)\cap\pi(x_2)\cap\dots\cap\pi(x_k)$, so $x_i\in U$ for every $i=1,2,\dots, k$. Since
$U$ is a (maximal) centered family, there exists $\bot\ne y\in P$ such that $y\sqsubseteq x_i$ for
every $i=1,2,\dots,k$. Thus $F$ is a centered family, contained in some maximal centered family
$M\subseteq P$. But then we have $M\in\pi$, so $$M\in\bigcap_{x\in M}\pi(x)\subseteq\bigcap_{x\in
F}\pi(x)\ne\varnothing.$$ Hence, $(X,\tau)$ is compact.

Let $U,V\in X=\pi$, $U\ne V$. Since both are maximal centered subfamilies of $P$, none of them can
contain the other one. So, there exist $x, y\in P$ such that $x\in U\smallsetminus V$ and $y\in
V\smallsetminus U$. Then $U\in\pi(x)$, $V\notin\pi(x)$, $V\in\pi(y)$, $U\notin\pi(y)$. Thus
$X\smallsetminus\pi(x)$, $X\smallsetminus\pi(y)$ are open sets in $(X,\tau)$ containing just one of
the points $U, V$. So the topological space $(X,\tau)$ satisfies the T$_1$ axiom.
\end{proof}

\bigskip

The motivation for introducing and studying the notion of a causal site lies especially in the hope
that it may be helpful in formulation and solution of certain problems in quantum gravity.
Especially in those situations, in which the traditional models are less convenient or even may
fail (see \cite{CC} for more detail). In this situations, possibly very different from our
macroscopic, everyday experience, also the topological structure of spacetime is an important and
legitimate subject of the research. This is one of the possible motivations for the topology that
we have introduced by the way described above and also a good motivation for Theorem~\ref{comp}.
Another, perhaps even more important motivation it is to investigate how the topology of spacetime,
which is perceived in the reality and implicitly is involved in physical phenomena, arises. So the
first question we should ask it is whether the corresponding topology, constructed from the causal
site by the described way has any physical meaning. But how to do that? Certainly, first we must
test the construction at those situations that are working and well understood in the scope of the
classical, traditional models. That is why we choose Minkowski space and its causal structure for
the next considerations. If our previous construction is worth, then the output topology that we
receive should be closely related to the Euclidean topology on $\M$.

\bigskip

In \cite{CC}, the authors show that the definition of a causal site is compatible with the inner
structure of the Minkowski space. Moreover, it is also shown that the same is true for the stably
causal Lorentzian manifold (for the precise definition of stable causality see \cite{CC}; by a
result of  S. Hawking and  G. Ellis \cite{HE},  it is equivalent to the existence of a global time
function).  However, it is easy to check that the causal site compatible with the stably causal
Lorentzian manifold need be not unique. As we will see later, for the purposes of reconstruction of
the topology from the causal structure we need much finer setting for the corresponding causal
site, than it is used in the two simple examples of the paper \cite{CC}.

\bigskip

Let us denote by $\M=\R^4$ the Minkowski space. Recall  that it has a natural structure of a real,
$4$-dimensional vector space, equipped with the bilinear form $\eta:\M\times\M\map\R$, called the
Minkowski inner product. The Minkowski modification of the inner product is not positively definite
as the usual inner product, but in the standard basis it is represented by the diagonal matrix with
the diagonal entries $(1, -1, -1, -1)$. Then a vector $v\in\M$ is called timelike, if
$\eta(v,v)>0$, lightlike or null if $\eta(v,v)=0$ and spacelike, if $\eta(v,v)<0$. Further, the
vector $v$ is said to be future-oriented, if its first coordinate, which represents the time, is
positive. Similarly, $v$ is past-oriented, if its first coordinate is negative. We write $v\ll w$
for $v,w \in \M$ if the vector $w-v$ is timelike and future-oriented. In \cite{CC} the sets of the
form $D(p, q)=\{x|\, x\in\M, p\ll x\ll q\}$ are called diamonds. They are used for the construction
of an example of a certain causal site. In this setting, diamonds are open sets in the Euclidean
topology, bounded by two light cones at points $p, q\in \M$. It is not difficult to show that open
diamonds form a base for the Euclidean topology on $\M$. However, for the purpose of a
reconstruction of the topology from the causal structure it is more convenient to consider the
closed variant of diamonds (with respect to the Euclidean topology).

\medskip

We define $p\leqslant q$ if the vector $q-p$ is non-past-oriented and non-spacelike, that is, if
its time coordinate is non-negative and $\eta(q-p,q-p)\ge 0$.  We also denote $\zero=(0,0,0,0)$.
Now, we put
$$J^+(p)=\{x|\, x\in\M, p\leqslant x\},$$
$$J^-(p)=\{x|\, x\in\M, x\leqslant p\}$$
and
$$J(p)=J^+(p)\cup J^-(p).$$

\bigskip

Let $\Vert\!\cdot\!\Vert$ be the Euclidean norm on $\M$. For a real number  $\varepsilon>0$ and a
point  $x\in M$, by $B_\varepsilon(x)$ we denote the open ball $B_\varepsilon(x)=\{y|\, y\in \M,
\Vert\!x-y\!\Vert<\varepsilon\}$. The Euclidean topology on $M$, generated by the norm
$\Vert\!\cdot\!\Vert$ and these open balls, we denote by $\tau_E$. The de Groot dual or co-compact
topology on $\M$ we denote by $\tau_E^G$.

\bigskip

For our next considerations we will need several lemmas, which will point out some important
properties of the relation $\leqslant$ and of the cones $J(p)$ in $\M$. We do not claim originality
for these results, only the context in which we will use them -- the construction of a certain
causal site on $\M$ -- is new. Although the results can be essentially found in the literature, in
order to avoid problems with different notation and also for the reader's convenience, we present
here the complete proofs. However, for a more advanced foundations of the conus theory, the reader
is referred to the comprehensive paper \cite{KR}.

\medskip

\begin{lemma}\label{uzk}
 The sets $J^+(\zero)$ and $J^-(\zero)$ are closed with respect to the operation $+$ of the vector space
 $(\M, +)$.
\end{lemma}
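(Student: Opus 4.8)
The plan is to make the defining condition for $J^+(\zero)$ completely explicit and then to verify directly that the sum of two of its elements again satisfies that condition. Writing a vector as $u=(u_0,\vec u)$ with $u_0\in\R$ the time coordinate and $\vec u\in\R^3$ the spatial part, the relation $\zero\leqslant u$ unwinds to the two requirements $u_0\ge 0$ and $\eta(u,u)=u_0^2-|\vec u|^2\ge 0$, where $|\cdot|$ denotes the Euclidean norm on $\R^3$. Thus $J^+(\zero)$ is exactly the closed future cone. Closure with respect to $+$ means precisely that whenever $u,v\in J^+(\zero)$ we also have $u+v\in J^+(\zero)$, so I would fix two such vectors and check the two requirements for $u+v$.

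The time-coordinate requirement is immediate, since $(u+v)_0=u_0+v_0\ge 0$ as a sum of two non-negative numbers. The real content is the inequality $\eta(u+v,u+v)\ge 0$. Here I would expand using the bilinearity of $\eta$:
\[
\eta(u+v,u+v)=\eta(u,u)+\eta(v,v)+2\bigl(u_0v_0-\vec u\cdot\vec v\bigr).
\]
The first two summands are $\ge 0$ by hypothesis, so everything reduces to showing that the cross term $u_0v_0-\vec u\cdot\vec v$ is non-negative.

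This cross-term estimate is the crux of the argument, although it is entirely routine. I would bound the spatial inner product by Cauchy--Schwarz, $\vec u\cdot\vec v\le|\vec u|\,|\vec v|$, and then use the cone conditions in the form $u_0\ge|\vec u|$ and $v_0\ge|\vec v|$, which follow from $u_0,v_0\ge 0$ together with $u_0^2\ge|\vec u|^2$ and $v_0^2\ge|\vec v|^2$. Multiplying the two non-negative inequalities gives $u_0v_0\ge|\vec u|\,|\vec v|\ge\vec u\cdot\vec v$, so the cross term is non-negative and hence $u+v\in J^+(\zero)$. Finally, for $J^-(\zero)$ I would avoid repeating the computation by observing that $J^-(\zero)=-J^+(\zero)$: if $u,v\in J^-(\zero)$ then $-u,-v\in J^+(\zero)$, so $-(u+v)\in J^+(\zero)$ by the case just settled, giving $u+v\in J^-(\zero)$. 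I do not anticipate any genuine obstacle; the only point requiring care is combining Cauchy--Schwarz with the cone conditions so that the product inequality points in the correct direction.
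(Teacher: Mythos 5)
Your proof is correct and takes essentially the same route as the paper's: both verify the two cone conditions for the sum directly, and your Cauchy--Schwarz estimate on the cross term is the same inequality that the paper packages as exact additivity of the collinear, equally-oriented time parts followed by the triangle inequality on the spatial parts ($\Vert t+u\Vert=\Vert t\Vert+\Vert u\Vert\ge\Vert s\Vert+\Vert r\Vert\ge\Vert s+r\Vert$ in its notation, where $t,u$ are temporal and $s,r$ spatial components). The only differences are cosmetic: you expand $\eta(u+v,u+v)$ bilinearly where the paper compares norms of the temporal and spatial parts, and you dispatch $J^-(\zero)$ via the symmetry $J^-(\zero)=-J^+(\zero)$ where the paper simply declares that case ``analogous.''
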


\begin{proof}
 Let $x,y\in J^+(\zero)$. Let $x=s+t$, $y=r+u$, where $r,s,t,u\in\M$ and the vectors $r$, $s$ have
zero time coordinate, and the vectors $u$, $t$ have zero space coordinates. Since $x\in
J^+(\zero)$, we have $\eta(x,x)\ge 0$, which is equivalent to $\Vert t\Vert\ge \Vert s\Vert$.
Similarly, from $y\in J^+(\zero)$ we get $\Vert u\Vert\ge\Vert r\Vert$. Since the time coordinates
of $x$, $y$ and so $t$, $u$  are of the same sign, and only one coordinate of $t$, $u$ can be
non-zero, it follows that $\Vert t+u\Vert=\Vert t\Vert+\Vert u\Vert\ge\Vert s\Vert+\Vert
r\Vert\ge\Vert s+r\Vert$. Then $\eta(x+y, x+y)\ge 0$. Since the time coordinate of $x+y$ is
non-negative (as the sum of the non-negative coordinates of $x$, $y$), we finaly get $x+y\in
J^+(\zero)$. The proof for $J^-(\zero)$ is analogous.
\end{proof}

\begin{lemma}
 The binary relation $\leqslant$ is a partial order on $\M$.
\end{lemma}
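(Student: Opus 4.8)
The plan is to check the three defining properties of a partial order for $\leqslant$, namely reflexivity, antisymmetry, and transitivity, throughout exploiting the fact that $p\leqslant q$ is precisely the statement $q-p\in J^+(\zero)$.

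Reflexivity I would dispose of immediately: for any $p\in\M$ the difference $p-p=\zero$ has vanishing time coordinate and satisfies $\eta(\zero,\zero)=0\ge 0$, so $\zero\in J^+(\zero)$, which is exactly $p\leqslant p$.

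For antisymmetry I would assume $p\leqslant q$ and $q\leqslant p$ and set $v=q-p$. The first relation forces the time coordinate of $v$ to be non-negative, while the second forces the time coordinate of $-v=p-q$ to be non-negative, so the time coordinate of $v$ is exactly $0$. With the time coordinate removed, $\eta(v,v)$ collapses to the negative of the squared Euclidean norm of the spatial part of $v$; the standing requirement $\eta(v,v)\ge 0$ then forces that spatial part to vanish as well, giving $v=\zero$ and hence $p=q$.

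The substantive step is transitivity, and here the previous lemma carries the load. Given $p\leqslant q$ and $q\leqslant r$, I would set $x=q-p$ and $y=r-q$, so that $r-p=x+y$ with $x,y\in J^+(\zero)$. By Lemma~\ref{uzk}, $J^+(\zero)$ is closed under the vector addition of $(\M,+)$, whence $x+y\in J^+(\zero)$; unwinding the definition, this says exactly that $r-p$ is non-past-oriented and non-spacelike, i.e. $p\leqslant r$. The genuinely delicate inequality — that adding two future-oriented non-spacelike vectors cannot produce a spacelike result — has thus already been pushed entirely into Lemma~\ref{uzk}, so the main obstacle is behind us and no fresh cone estimate is required in the present proof.
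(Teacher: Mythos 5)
Your proof is correct and, on the properties the paper actually checks, follows the same route: reflexivity is immediate, and transitivity is obtained by writing $r-p=(r-q)+(q-p)$ and invoking Lemma~\ref{uzk}, exactly as the paper does. The one genuine difference is that you also verify antisymmetry, which the paper's proof omits entirely even though it is part of the definition of a partial order; your argument for it is sound, since $p\leqslant q$ and $q\leqslant p$ force the time coordinate of $v=q-p$ to vanish, after which $\eta(v,v)\ge 0$ (now equal to the negative of the squared Euclidean norm of the spatial part of $v$) forces $v=\zero$, i.e.\ $p=q$. In this respect your write-up is more complete than the paper's own proof.
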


\begin{proof}
Certainly, $\leqslant$ is reflexive. Suppose that $p\leqslant q$ and $q\leqslant r$ for some
$p,q,r\in\M$. Then $r-p=(r-q)+(q-p)$, so if the time coordinates of $q-p$ and $r-q$ are
non-negative, the same holds also for $r-p$.  Since $\eta(q-p, q-p)\ge 0$ and $\eta(r-q, r-q)\ge
0$, we have $q-p\in J^+(0)$ and $r-q\in J^+(0)$. By Lemma \ref{uzk}, $r-p\in J^+(0)$. Then
$\zero\leqslant r-p$, which gives $\eta(r-p,r-p)\ge 0$. Thus $\leqslant$ is also a transitive
relation.
\end{proof}

\medskip

We denote
$$\Diamond(p,q)=J^+(p)\cap J^-(q),$$
where $p,q\in\M$, $p\leqslant q$. Now let us construct a causal site which reflects causality and
topological properties of  Minkowski space $\M$. Denote $\mathcal D=\{\Diamond(p,q)|\, p,q\in\Q^4,
p\leqslant q\}$. Now, let $(P,\cup, \cap)$ be the set lattice generated by the elements of
$\mathcal D$. Since $P$ can be represented by lattice polynomials (see, e.g. \cite{Gr}), every
element of $P$ can be expressed by unions and intersections of finitely many elements of $\mathcal
D$, it is compact and closed with respect to the Euclidean topology $\tau_E$ on $\M$.

\begin{lemma}
 The family $P$ is a closed base for the co-compact topology on $\M$.
\end{lemma}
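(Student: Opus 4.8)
The plan is to check the two requirements of a base for the closed sets of $(\M,\tau_E^G)$ separately: that every member of $P$ is $\tau_E^G$-closed, and that every $\tau_E^G$-closed set is realized as an intersection of members of $P$.

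The first requirement is immediate from the earlier observations. The space $(\M,\tau_E)$ is non-compact, locally compact and Hausdorff, so Theorem~\ref{degroot} applies and the closed base of $\tau_E^G$ is, by definition, the family of compact saturated subsets of $(\M,\tau_E)$; as $(\M,\tau_E)$ is Hausdorff and hence $T_1$, every set is saturated, so this closed base is simply the family of all $\tau_E$-compact sets. Each element of $P$ is a finite union of finite intersections of members of $\mathcal D$, hence $\tau_E$-compact, hence $\tau_E^G$-closed. Moreover $P$ is a lattice under $\cup$ and $\cap$, so it is closed under finite unions, and consequently the intersections of subfamilies of $P$ are exactly the closed sets of a topology $\tau_P$ with $\tau_P\subseteq\tau_E^G$.

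For the reverse inclusion it suffices to verify the separation property characterizing a closed base: for every $\tau_E^G$-closed set $C$ and every $x\in\M\smallsetminus C$ there is some $B\in P$ with $C\subseteq B$ and $x\notin B$; indeed this gives $C=\bigcap\{B\in P:C\subseteq B\}$. Since every $\tau_E^G$-closed set is either $\tau_E$-compact or equal to $\M$ (and $\M$ has no exterior points, corresponding to the empty intersection), we may assume $C=K$ is compact. The idea is to cover $K$ by finitely many small closed rational diamonds, each of which avoids $x$. For a fixed $y\in K$ we have $y\neq x$; since $\{z:z\ll y\}$ and $\{z:y\ll z\}$ are open and $\Q^4$ is dense, we may pick $p,q\in\Q^4$ with $p\ll y\ll q$ and $\Vert q-p\Vert$ as small as we wish. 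The diameter of $\Diamond(p,q)=J^+(p)\cap J^-(q)$ is bounded by a fixed multiple of $\Vert q-p\Vert$: for $z\in\Diamond(p,q)$ the vector $z-p$ lies in $J^+(\zero)$, so its Euclidean norm is at most $\sqrt2$ times its time coordinate, which in turn does not exceed that of $q-p$. Choosing $\Vert q-p\Vert$ small enough that this diameter is below $\Vert x-y\Vert$ guarantees $x\notin\Diamond(p,q)$, while $p\ll y\ll q$ gives $y\in\int\Diamond(p,q)$.

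The interiors $\int\Diamond(p,q)$ obtained in this way, one for each $y\in K$, form a $\tau_E$-open cover of the compact set $K$; extracting a finite subcover indexed by $y_1,\dots,y_n$ and setting $B=\Diamond(p_1,q_1)\cup\dots\cup\Diamond(p_n,q_n)$ produces an element of $P$ with $K\subseteq B$ and $x\notin B$. This establishes the separation property and therefore that $P$ is a closed base for $\tau_E^G$. I expect the main obstacle to be the diameter estimate for the rational diamonds, i.e. controlling the Euclidean size of $\Diamond(p,q)$ by $\Vert q-p\Vert$ while simultaneously keeping the vertices rational and $y$ in the interior; this is exactly where the cone geometry recorded in Lemma~\ref{uzk} and the subsequent lemmas is needed.
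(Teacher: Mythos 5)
Your proof is correct and takes essentially the same route as the paper: for a $\tau_E$-compact $K$ and $x\notin K$, cover $K$ by finitely many small closed rational diamonds each avoiding $x$, and take their union as the separating element of $P$. The only difference is that you explicitly justify the existence of arbitrarily small rational diamonds $\Diamond(p,q)$ with $y\in\int\Diamond(p,q)$ and $x\notin\Diamond(p,q)$ via the $\sqrt{2}$-diameter estimate, a step the paper simply asserts.
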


\begin{proof}
The co-compact topology $\tau_E^G$ on $\M$ is generated by its open base, which is formed by the
complements of sets, compact in the Euclidean topology $\tau_E$.

Let $K\subseteq\M$ be compact. Denote $U=\M\smallsetminus K$.  Take a point $x\in U$. For every
$y\in K$ there exist $p_y, q_y\in \Q^4$, $p_y\leqslant q_y$, such that $y\in\int \Diamond(p_y,
q_y)$, where the interior is considered with respect to the Euclidean topology $\tau_E$ on $\M$,
and $x\notin \Diamond(p_y, q_y)$. Since $K$ is compact, there exist $y_1, y_2, \dots, y_k\in K$
with
$$K\subseteq \bigcup_{i=1}^k \int \Diamond(p_{y_i}, q_{y_i}).$$ Then
$$x\in \bigcap_{i=1}^k (\M\smallsetminus\Diamond(p_{y_i}, q_{y_i}))=\M\smallsetminus\
\bigcup_{i=1}^k \Diamond(p_{y_i}, q_{y_i})\subseteq U,$$ and the closed set $\bigcup_{i=1}^k
\Diamond(p_{y_i}, q_{y_i})$ is an element of $P$. Hence, also every set $U$, which is open with
respect to  $\tau_E^G$, is a union of complements of elements of $P$, which are closed in the same
topology. Then $P$ forms a closed base for $\tau_E^G$.
\end{proof}

\medskip

Finally, we are ready to complete the construction of the causal site on $\M$. Let $A, B\in P$
non-empty. We put $A\prec B$ if $A\ne B$ and for every $a\in A$, $b\in B$, $a\leqslant b$.

\begin{theorem} $(P,\subseteq,\prec)$ is a causal site.
\end{theorem}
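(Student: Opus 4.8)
The plan is to exhibit the causal-site operations explicitly and then check the four axioms in turn. I take $\sqsubseteq$ to be set inclusion $\subseteq$, the binary supremum $\sqcup$ to be union $\cup$ (which is again in $P$ because $P$ is closed under unions), and the least element $\bot$ to be $\varnothing$; the empty set belongs to $P$ since two causally unrelated rational diamonds meet in $\varnothing$. With these choices $(P,\subseteq)$ is trivially a partial order with least element $\varnothing$ and least upper bounds $A\cup B$. For the strict order I would first check that $(P\smallsetminus\{\varnothing\},\prec)$ is anti-reflexive, which is immediate from the clause $A\ne B$ in the definition, and transitive: if $A\prec B$ and $B\prec C$ then, picking any $b\in B$ (possible since $B\ne\varnothing$), every $a\in A$ and $c\in C$ satisfy $a\leqslant b\leqslant c$, whence $a\leqslant c$ by transitivity of $\leqslant$; that $A\ne C$ follows because $A=C$ would force $A\prec B\prec A$, and then antisymmetry of $\leqslant$ collapses $A$ and $B$ to a common singleton, contradicting $A\ne B$.

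Next I would verify the three compatibility axioms (i)--(iii), all of which reduce to the pointwise nature of $\prec$. For (i), if $B\subseteq A$ and $A\prec C$ then each $b\in B$ lies in $A$, so $b\leqslant c$ for all $c\in C$; (ii) is symmetric, and (iii) follows because every point of $A\cup B$ lies in $A$ or in $B$ and hence precedes all of $C$. The only delicate issue in each case is the ``$\ne$'' clause required by the definition of $\prec$: for (iii) one checks that $A\sqcup B=C$ together with $A\prec C$ and $B\prec C$ forces, via antisymmetry of $\leqslant$, a contradiction with $A\ne C$; the analogous verification for (i) and (ii) is where degenerate single-point rational diamonds demand the most care, and I would treat these boundary configurations separately.

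The substance of the theorem, and the step I expect to be the main obstacle, is the cutting axiom (iv). For regions $A,B\in P$ I would define the cutting $B_A$ to be the part of $B$ lying in the common causal past of all of $A$, namely $B_A=B\cap\bigcap_{a\in A}J^-(a)$. The universal property (b) is then easy: if $C\in P$ satisfies $C\subseteq B$ and $C\prec A$, then every $c\in C$ obeys $c\leqslant a$ for all $a\in A$, so $c\in\bigcap_{a\in A}J^-(a)$ and therefore $C\subseteq B_A$; part (a), that is $B_A\subseteq B$ and $B_A\prec A$, holds by construction, with the case $B_A=\varnothing$ covered by reading $\bot\prec A$ as the (vacuously true) instance of the definition of $\prec$.

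The genuinely hard point is to show that $B_A$ is actually an element of $P$, since $\bigcap_{a\in A}J^-(a)$ is a priori an intersection of infinitely many past cones. Here I would exploit the explicit generation of $P$: writing $A$ and $B$ as finite unions of finite intersections of rational diamonds, I first reduce the common past of a single diamond $\Diamond(p,q)$ to the single rational past cone $J^-(p)$ and distribute over the unions making up $A$, so that $\bigcap_{a\in A}J^-(a)$ becomes a finite intersection of rational past cones. Intersecting term-by-term with $B$, on each nonempty term any witnessing point $x$ forces the tip inequalities $p_{ij}\leqslant p$, which lets the otherwise unmatched past cone $J^-(p)$ be paired with an already-present future cone $J^+(p_{ij})$ into a legitimate rational diamond $\Diamond(p_{ij},p)$; the term thereby stays a finite intersection of diamonds and hence lies in $P$. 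The remaining obstruction is the case where $A$ is itself an intersection of diamonds, so that its common past is governed by several minimal elements rather than a single tip; I expect this to require either a direct analysis of the minimal elements of $A$ or a lattice-theoretic argument producing the supremum within $P$ of all regions $C\subseteq B$ with $C\prec A$, and this is where the bulk of the technical work will lie.
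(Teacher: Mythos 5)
Your proof skeleton is the same as the paper's: the same reading of $\sqsubseteq$, $\sqcup$, $\bot$, essentially the same transitivity argument for $\prec$, and in fact the same cutting. The paper sets $B_A=B\cap A_\bot$ with $A_\bot=\bigcup_{m\in M_A}J^-(m)$, where $M_A$ is a set of maximal lower bounds of $A$ produced by a Zorn/cluster-point argument; since every lower bound of $A$ lies below a maximal one, and every point below a lower bound is again a lower bound, this $A_\bot$ is exactly your $\bigcap_{a\in A}J^-(a)$. So your verification of (iv)(a) and of the universal property (iv)(b) coincides with the paper's, and your formulation is actually cleaner, since it needs no Zorn's lemma at all.

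The step you explicitly leave open, however --- that $B_A\in P$ when $A$ involves intersections of diamonds --- is the entire substance of the theorem, and it is precisely where the paper's own proof breaks down, so there is nothing you could cite to fill it. The paper fills it by asserting that $M_A$ is finite; that assertion is false: already for $A=\Diamond(p_1,q_1)\cup\Diamond(p_2,q_2)$ with $p_1,p_2$ spacelike separated, the maximal lower bounds form a two-dimensional piece of a hyperboloid (your reduction survives this case, because $\bigcap_{a\in A}J^-(a)=J^-(p_1)\cap J^-(p_2)$ still has rational tips). For $A=\Diamond(p_1,q_1)\cap\Diamond(p_2,q_2)$ the difficulty is genuine: the minimal points of $A$ form a cap of the hyperboloid $\partial J^+(p_1)\cap\partial J^+(p_2)$ whose edge is a circle $E$, and one checks (e.g.\ for $z$ far out spatially, where the binding constraint comes from the farthest edge point) that $\partial\bigl(\bigcap_{a\in A}J^-(a)\bigr)$ contains pieces of $\partial\bigl(\bigcap_{e\in E}J^-(e)\bigr)$; with $E$ the circle of radius $Y$ at height $h$ about the symmetry axis, this is the hypersurface $h-t=\sqrt{x_1^2+(\sqrt{x_2^2+x_3^2}+Y)^2}$, a curved set that is not contained in any finite union of null cones. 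Every element of $P$ is a finite union of finite intersections of rational diamonds, so its boundary is covered by finitely many null cones with rational vertices, and distinct null cones meet in lower-dimensional sets; hence if $B$ is a small rational diamond placed across such a curved piece, no element of $P$ fits between $\operatorname{cl}\operatorname{int}(B\cap A_\bot)$ and $B\cap A_\bot$ --- yet any legitimate cutting must, since by (iv)(b) it contains every rational diamond inside $\operatorname{int}(B\cap A_\bot)$ and by (iv)(a) it is contained in $B\cap A_\bot$. So the obstruction you postponed cannot be overcome by the pairing trick you describe, nor by the paper's argument; it is a defect of the construction itself. Incidentally, your caution about the clause $A\ne B$ in axioms (i)--(ii) is also warranted and is not a removable boundary case: $P$ contains singletons ($\Diamond(q,q)$, or $\Diamond(p,q)\cap\Diamond(q,r)=\{q\}$), and taking $A=\Diamond(p,q)$, $B=C=\{q\}$ gives $B\subseteq A$ and $A\prec C$ but not $B\prec C$, so (i) fails outright --- something the paper's ``trivially'' overlooks.
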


\begin{proof}
First of all, we need to show that $\prec$ is a transitive on the set $P\smallsetminus
\{\varnothing\}$ (the anti-reflexivity of $\prec$ follows directly from the definition). Suppose
that $A\prec B$ and $B\prec C$, where $A, B,C $ are non-empty. Let $a\in A$, $c\in C$. Since
$B\ne\varnothing$, there is some $b\in B$. The vectors $b-a$ and $c-b$ are non-spacelike and
non-past-oriented. Then also the vector $c-a=(c-b)+(b-a)$ is also non-space-like and
non-past-oriented. Suppose that $A=C$. Then $A\prec B$ and $B\prec A$. Taking any $a^\prime\in A$
and $b^\prime\in B$, we get that both vectors $a^\prime-b^\prime$ and $b^\prime-a^\prime$ are
non-spacelike and non-past-oriented, which gives $a^\prime=b^\prime$. Then $A=B$ is a singleton,
but this equality contradicts to the definition of the relation $\prec$. Thus $\prec$ is
transitive.

Since $\subseteq$ is the set inclusion, the axioms (i)-(iii) are satisfied trivially. Let us check
the axiom (iv). Let $A\in P$, $A\ne \varnothing$. Since in the Euclidean topological structure the
compact sets are bounded, there exists a diamond $D=\Diamond(p_0,q_0)$ with $A\subseteq D$. Denote
$$O_A=\{p|\, p\in D, A\subseteq J^+(p)\}.$$
Since $q_0\in O_A$, $O_A\ne\varnothing$. Let $L\subseteq O_A$ be a non-empty linearly ordered chain
with respect to $\leqslant$. We will show that $L$ has an upper bound in $O_A$. Consider the net
$\id L(L,\leqslant)$. Since $D$ is compact, $\id L(L,\leqslant)$ has a cluster point, say $p_L\in
D$. Suppose that there is some $l\in L$ such that $p_L\notin J^+(l)$. Since the set $J^+(l)$ is
closed in $\M$, there exists $\varepsilon >0$ such that $B_\varepsilon(p_L)\cap
J^+(l)=\varnothing$. By the definition of the cluster point, there exists $m\in L$, $l\leqslant m$,
such that $m\in B_\varepsilon(p_L)$. Then $m\in J^+(m)\cap B_\varepsilon(p_L)$, but this is not
possible since $J^+(m)\subseteq J^+(l)$. Hence, $p_L\in \bigcap_{l\in L} J^+(l)$, which means that
$p_L$ is an upper bound of $L$ in $D$. It remains to show that $A\subseteq J^+(p_L)$. Suppose
conversely, that there exists some $r\in A\smallsetminus J^+(p_L)$. Since $J^+(p_L)$ is closed in
$\M$, there exists $\varepsilon >0$ such that $B_\varepsilon(r)\cap J^+(p_L)=\varnothing$. Since
$p_L$ is a cluster point of the net $\id L(L,\leqslant)$, there exists $n\in L$, $n\in
B_{\varepsilon / 2}(p_L)$. Then $r\in A\subseteq J^+(n)$. Denote $q=r+(p_L-n)$. The vector $q$ is
the translation of $r$ by the vector $p_L-n$, and $J^+(p_L)$ is the translation of the cone
$J^+(n)$ by the same vector, so $q\in J^+(p_L)$. Now, $0<\varepsilon\le\Vert r-q\Vert=\Vert
n-p_L\Vert<{\varepsilon\over 2}$, which is a contradiction. Thus $A\subseteq J^+(p_L)$, and so
$p_L\in O_A$ is the upper bound of the chain $L$. Let $M_A$ be the set of all maximal elements of
$O_A$ (with respect to the order $\leqslant $). By Zorn's Lemma, for every $p\in O_A$ there exists
$m\in M_A$ such that $p\leqslant m$. We put
$$A_\bot=\bigcup_{m\in M_A} J^-(m),$$ and for $B\in P$, $B\ne A$ we denote $$B_A=B\cap A_\bot.$$
To claim that $B_A\in P$, we need to show that $M_A$ is finite. The boundary of $A\in P$ can be
decomposed into a finite set $S_A$ of pieces of the boundaries of the cones $J(t)$, $t\in T_A$,
where $T_A$ is a proper finite set. If $m\in M_A$, then the boundary of $J(m)$ must intersect some
elements of $S$, otherwise $m$ cannot be maximal. Moreover, the cone $J(m)$ is fully determined by
a finite and limited number of such intersections, because the points of these intersections must
satisfy the equation of the boundary of $J(m)$. But this would not be possible for an infinite set
$M_A$. Then $B_A\in P$. Let $b\in B_A$, $a\in A$. By the definition of $B_A$, there exists some
$m\in M_A$ with $b\in J^-(m)$, so $b\leqslant m$. We also have $a\in A\subseteq J^+(m)$, so
$m\leqslant a$. Then $b\leqslant a$, which implies $B_A\prec A$.

Suppose that $C\prec A$, $C\subseteq B$ for some $C\in P$. Let $c\in C$. If $a\in A$, then $c\leqslant a$,
which gives $a\in J^+(c)$. Therefore, $A\subseteq J^+(c)$. Then $c\in O_A$, so there exists $m\in M_A$,
such that $c\leqslant m$. Then $c\in J^-(m)\subseteq A_\bot$. Hence, $C\subseteq A_\bot$,
which together with $C\subseteq B$ gives the requested inclusion $C\subseteq B_A$.
\end{proof}

\medskip

Now we will concentrate us on the reconstruction of the original topology on $\M$ from the
causality structure of $(P,\subseteq,\prec)$. Let $\pi$ be the family of all maximal centered
subsets of $P$.

\begin{theorem}
 The topological space $(X,\tau)$ corresponding to the framework $(P^d,\pi^d)$ is homeomorphic to $\M$ equipped with the co-compact topology.
\end{theorem}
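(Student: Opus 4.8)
The plan is to exhibit an explicit map $\varphi\colon\M\to X$ and verify it is a homeomorphism directly on the (sub)bases, since the classical ``continuous bijection from a compact space into a Hausdorff space is a homeomorphism'' shortcut is unavailable here: both $X$ (by Theorem~\ref{comp}) and $(\M,\tau_E^G)$ are compact $T_1$ but non-Hausdorff. For $z\in\M$ set $M_z=\{A\in P\mid z\in A\}$ and define $\varphi(z)=M_z$. The first thing I would record is that, because $P$ is a lattice closed under finite intersections, a subfamily $F\subseteq P$ is centered in the sense of the causal-site construction exactly when it has the finite intersection property: the required witness $y\sqsubseteq x_i$ may be taken to be $y=x_1\cap\cdots\cap x_k\in P$, which is nonempty precisely when $F$ has f.i.p. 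Hence $X=\pi$ is the set of maximal f.i.p. subfamilies of $P$, and the closed subbase is $\pi^d=\{\pi(A)\mid A\in P\}$ with $\pi(A)=\{U\in\pi\mid A\in U\}$.

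Next I would show $\varphi$ is a well-defined bijection. $M_z$ has f.i.p. trivially, as $z$ lies in every finite intersection of its members. For maximality, suppose $A\in P$ with $z\notin A$; since $A$ is Euclidean-closed, pick $\varepsilon>0$ with $B_\varepsilon(z)\cap A=\varnothing$, and (exactly as in the proof that $P$ is a closed base) choose $p,q\in\Q^4$ with $p\ll z\ll q$ and $\Diamond(p,q)\subseteq B_\varepsilon(z)$; then $D=\Diamond(p,q)\in P$ satisfies $z\in D$ and $D\cap A=\varnothing$, so $M_z\cup\{A\}$ fails f.i.p. Thus no centered family strictly contains $M_z$, i.e. $M_z\in\pi$. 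The same shrinking-diamond separation yields injectivity: if $z\ne w$, a small enough rational diamond around $z$ omits $w$, so $M_z\ne M_w$. For surjectivity, let $U\in\pi$; its members are compact in $(\M,\tau_E)$ and have f.i.p., so $\bigcap U\ne\varnothing$ by Euclidean compactness; choosing $z\in\bigcap U$ gives $U\subseteq M_z$, and maximality of $U$ forces $U=M_z=\varphi(z)$.

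Then I would transfer the topology. A direct computation gives $\varphi^{-1}(\pi(A))=\{z\mid A\in M_z\}=A$, and, using surjectivity, $\varphi(A)=\{M_z\mid z\in A\}=\pi(A)$. Since $\varphi$ is a bijection, both $\varphi$ and $\varphi^{-1}$ commute with arbitrary intersections and finite unions. The closed sets of $X$ are intersections of finite unions of the subbasic sets $\pi(A)$, whose $\varphi$-preimages are the corresponding intersections of finite unions of the $A\in P$, hence closed in $\tau_E^G$ because $P$ is a closed base for $\tau_E^G$; this is continuity of $\varphi$. Conversely every $\tau_E^G$-closed set is an intersection of members of $P$ (the closed base $P$ being closed under finite unions), and its $\varphi$-image is the corresponding intersection of the $\pi(A)$, hence closed in $X$; this is continuity of $\varphi^{-1}$. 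Therefore $\varphi$ is a homeomorphism $(\M,\tau_E^G)\cong(X,\tau)$.

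The step I expect to be the main obstacle is the maximality of $M_z$, which rests entirely on the geometric fact that closed rational causal diamonds with $z$ in their interior form a neighbourhood base at $z$ for $\tau_E$ and can be shrunk to miss any prescribed compact set not containing $z$; this is the same separation already used in the closed-base lemma, so it can be imported rather than re-proved. The only other point requiring care is that the compactness-to-Hausdorff argument does not apply, so both directions of continuity must be checked by hand through the base/subbase correspondence above --- which the bijectivity of $\varphi$ renders routine once $\varphi(A)=\pi(A)$ and $\varphi^{-1}(\pi(A))=A$ are in place.
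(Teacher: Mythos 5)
Your proposal is correct and follows essentially the same route as the paper: the same bijection $z\mapsto\{A\in P\mid z\in A\}$, maximality established by shrinking rational diamonds separating $z$ from any Euclidean-closed $A\in P$ missing $z$, surjectivity from compactness of the members of a maximal centered family, and continuity in both directions verified directly through the identities $\varphi^{-1}(\pi(A))=A$ and $\varphi(A)=\pi(A)$ on the closed (sub)bases. Your explicit observation that centeredness in the causal-site sense coincides with the finite intersection property (because $P$ is a lattice with $\bot=\varnothing$) is a small but worthwhile clarification that the paper uses only implicitly.
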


\begin{proof}
 As we already defined before, $X=P^d=\pi$. Note that any point $p\in\M$ defines a maximal centred subset of $P$, say $f(p)=\{C|\,
C\in P, p\in C\}$. The family $f(p)$ obviously is centered, since $P$ is closed under finite
intersections and $f(p)$ contains those elements of $P$, whose  contain $p$. Let $Q$ be another
centered family such that $f(p)\subseteq Q\subseteq P$. Suppose that there is some $F\in Q$, such
that $p\notin F$. The set $\M\smallsetminus F$ is open with respect to the Euclidean topology
$\tau_E$, so there exist $u, v\in \Q^4$, $u\leqslant v$, such that $p\in\Diamond(u,v)\subseteq
\M\smallsetminus F$. But $\Diamond(u,v)\in P$, so $\Diamond(u,v)\in f(p)\subseteq Q$, while
$\Diamond(u,v)\cap F=\varnothing$. This contradicts to the assumption that $Q$ is centered. Thus
all elements of $Q$ contain $p$, which means that $Q=f(p)$. Now it is clear that $f(p)$ is a
maximal centred subfamily of $P$.

Conversely, a maximal centered subfamily $Q\in\pi$ has a nonempty intersection, because of
compactness of $\M$ in the co-compact topology. If $\{x,y\}\subseteq \bigcap_{F\in Q}F$,  where
$x\ne y$, then there exist $u, v\in \Q^4$, $u\leqslant v$, such that  $x\in\Diamond(u,v)$ and
$y\notin\Diamond(u,v)$. Then $Q\cup\{\Diamond(u,v)\}\subseteq P$ is an extension of $Q$ which is
also centered, which contradicts to the maximality of $Q$. Thus the intersection of $\bigcap_{F\in
Q}F$ contains only one element, say $g(Q)$. Consequently we have $g(f(p))=p$ and $f(g(Q))=Q$. Thus
the mappings $f:\M\map X$ and  $g:X\map\M$ are bijections inverse to each other. Further, for $A\in
P$ we have $g^{-1}(A)=\{Q|\, Q\in\pi, g(Q)\in A\}=\{Q|\, Q\in\pi, Q\in f(A)\}=\{Q  |\, Q\in\pi,
A\in Q \}=\pi(A)$, which is a subbasic closed set in $(X,\tau)$. Then $g:X\map\M$ is continuous.

Now, take a set $\pi(B)$, where $B\in P$, from the closed base $\pi^d$ of $\tau$. Then
$f^{-1}(\pi(B))=\{p|\, p\in\M, f(p)\in\pi(B)\}=\{p|\, p\in\M, B\in f(p)\}$. For every $p\in
f^{-1}(\pi(B))$, $f(p)$ is a maximal centered subfamily of $P$, containing the set $B$ (which is
compact with respect to $\tau_E$). As we have shown above, its intersection contains the only
element $g(f(p))=p$. So $f^{-1}(\pi(B))=\{p|\, p\in\M, p\in B\}=B$. Since $B$ is a compact set with
respect to the Euclidean topology $\tau_E$ on $\M$, it is closed in the co-compact topology and the
map $f:\M\map X$ is continuous. Hence, the spaces $(X,\tau)$ and $\M$, equipped with the co-compact
topology, are homeomorphic.
\end{proof}

\bigskip

\section{Final Remarks in Historical Context}

\medskip

The progress in mathematical and theoretical physics witnesses that various applications of
topology in physics may be far-reaching and illuminating. It could be very difficult to track down
the origins of such applications, but one of the first attempts may be associated with the year
1914, when A. A. Robb came with his axiomatic system for  Minkowski space $\M$, analogous to the
well-known axioms of Euclidean plane geometry. In \cite{Rb} he essentially proved that the
geometrical and topological structure of $\M$ can be reconstructed from the underlying set and a
certain order relation among its points. As it is noted in \cite{Do},  some prominent
mathematicians and physicists criticized the use of locally Euclidean  topology in mathematical
models of the spacetime. Perhaps as a reflection of these discussions, approximately at the same
time when de Groot wrote his papers on co-compactness duality, there appeared two interesting
papers \cite{Ze} and \cite{Ze2}, in which E. C. Zeeman studied an alternative topology for
Minkowski space. The Zeeman topology, also referred as the fine topology, is the finest topology on
$\M$, which induces the $3$-dimensional Euclidean topology on every space-axis and the
1-dimensional Euclidean topology on the time-axis. Among other interesting properties, it induces
the discrete topology on every light ray. A. Kartsaklis in \cite{Ka} studied connections between
topology and causality. He attempted to axiomatize causality relationships on a point set, equipped
with three binary relations, satisfying certain axioms, by a structure called a {\it causal space}.
He also introduced so called {\it chronological topology}, the coarsest topology, in which every
non-empty intersection of the chronological future and the chronological past of two distinct
points of a causal space is open.

\medskip

In the camp of quantum gravity, there appeared similar efforts and attempts to get some gain from
studying the underlying  structure of spacetime -- topological, geometrical or discrete -- however,
significantly later. The possible motivation is explained, for instance, in \cite{Ro}. C.~Rovelli
notes here that the loop quantum gravity leads to a view of the geometry structure of spacetime at
the short-scale level extremely different from that of a smooth geometry background. Also the
topology of spacetime at Planck scales could be very different from that we meet in our everyday
experience and which has been originally extrapolated from the fundamental concepts of the
continuous and smooth mathematics. Thus the usual properties and attributes of the spacetime, like
its Hausdorffness or metrizability may not be satisfied (for a groundbreaking paper, see
\cite{HPS}). The most important source of inspiration for our paper was the work \cite{CC} of  J.
D. Christensen and L. Crane. Motivated by certain requirements of their research in quantum
gravity, these authors developed a novel axiomatic system for the generalized spacetime, called
{\it causal site}, qualitatively different from the previous, similar attempts. The notion itself
is a successful synthesis of two other notions, a  Grothendieck site (which basically is a small
category equipped with the Groethen\-dieck topology) \cite{Ar} and a  causal set of R. Sorkin
\cite{So}. One of the most important merits of the new axiomatic system it is the fact that the
causal site is a pointless structure, not unlike to some well-known concepts of pointless topology
and locale theory.

\medskip

The contents of our paper can be considered as a certain kind of a virtual experiment. We
constructed a topology from a general causal site by a purely mathematical, straightforward and
canonical way. Taking the causal site given by Minkowski space we did not receive the usual and
naturally expected Euclidean topology on $\M$, but its de Groot dual. This is surprising, because
the received topology seems to be more closely related to the way, how the philosophy of physics
traditionally understands the infinity in a~context of expected finiteness of the physical
quantities. As it was remarked by de Groot in \cite{Gro} (and also by J. M. Aarts in oral
communication with de Groot), from the philosophical point of view, the co-compact topology is
naturally related to the concept of potential infinity -- in a contrast to the notion of actual
infinity, which is mostly used in the traditional mathematical approach. To illustrate the
difference, consider a countably infinite sequence $x_1, x_2, \dots$ of points lying on a straight
line in space or spacetime, with the constant distance between $x_i$ and its successor $x_{i+1}$.
In the usual, Euclidean topology, the sequence is divergent and it approaches to an improper point
at infinity. To make it convergent, one need to embed the space into its compactification (for
instance, the Alexandroff one-point compactification is a suitable one). The points completed by
the compactification then appear at the infinite distance from any other point of the space. On the
other hand, the co-compact topology, which locally coincides with the usual topology, is already
compact and superconnected, so the sequence $x_1, x_2, \dots$ is residually in each neighborhood of
every point. Since the co-compact topology locally coincides with the Euclidean topology, in most
cases it performs the same job, but in a ``more elegant" way -- with less open sets. Both
topologies are closely related to each other via the de Groot duality as we described in
Section~\ref{prerequisites}.

We may close the paper by returning to the question, that we stated at the beginning. The result of
our virtual experiment certainly is not a rigorous proof of the conjecture that the constructed
causal topology will fit with the reality also in more complex and more complicated physical
situations. But, at least, it confirms that notion of causal site of J. D. Christensen and L. Crane
is designed correctly. And it gives a strong reason for the believe, that the causal structure is
the primary structure of the spacetime, which also carries its topological information.

\bigskip

\bibliographystyle{amsplain}

\end{document}